
\documentclass[a4paper,10pt]{article}

\usepackage{amsmath}
\usepackage{amsthm}
\usepackage{amssymb}
\usepackage[vlined,boxed]{algorithm2e}
\usepackage{graphicx}

\usepackage{color} 
\definecolor{tagada}{rgb}{1,0.15,0.15}

\usepackage{enumerate}
\usepackage{vmargin}

\setpapersize{A4}
\setmarginsrb{3cm}{2.5cm}{3cm}{2.5cm}{0.5cm}{0.5cm}{0.5cm}{1cm}

\theoremstyle{plain}
  \newtheorem{theorem}{Theorem}
  \newtheorem{lemma}[theorem]{Lemma}
  \newtheorem{corollary}[theorem]{Corollary}

\theoremstyle{definition}
  \newtheorem{definition}[theorem]{Definition}

\theoremstyle{remark}
  \newtheorem*{remark}{Remark}

\newcommand{\ids}{independent dominating set\xspace}
\newcommand{\MIDSpb}{\textsc{MIDS}\xspace}
\newcommand{\mids}{mids\xspace}
\newcommand{\MISpb}{\textsc{MIS}\xspace}
\newcommand{\MDSpb}{\textsc{MDS}\xspace}
\newcommand{\CSPpb}{\textsc{CSP}\xspace}

\newcommand{\branchm}{\mathbf{branch\_mark}}
\newcommand{\brancha}{\mathbf{branch\_all}}
\newcommand{\brancho}{\mathbf{branch\_one}}
\newcommand{\runtime}{1.3569}
\newcommand{\runtimeprec}{1.35684}

\title{A Branch-and-Reduce Algorithm for Finding a Minimum Independent
Dominating Set
\footnote{A preliminary version of this paper appeared in the
\textit{proceedings of WG 2006} \cite{WG2006}.}
}

\author{Serge~Gaspers\thanks{%
Center for Mathematical Modeling,
Universidad de Chile, 8370459 Santiago, Chile.
\texttt{sgaspers@dim.uchile.cl}}
\and Mathieu~Liedloff\thanks{%
Laboratoire d'Informatique Fondamentale d'Orl\'eans,
Universit{\'e} d'Orl{\'e}ans, 45067 Orl{\'e}ans Cedex 2, France.
\texttt{mathieu.liedloff@univ-orleans.fr}}
}

\date{}

\begin{document}

\SetAlgoSkip{}

\maketitle

\begin{abstract}
An \ids $\mathcal{D}$ of a graph $G=(V,E)$ is a subset of vertices such that
every vertex in $V \setminus \mathcal{D}$ has at least one neighbor in $\mathcal{D}$ and
$\mathcal{D}$ is an independent set, i.e. no two vertices of $\mathcal{D}$
are adjacent in $G$. Finding a minimum \ids in a graph is an NP-hard problem.
Whereas it is hard to cope with this problem using parameterized
and approximation algorithms, there is a simple exact $O(1.4423^n)$-time algorithm
solving the problem by enumerating all maximal independent sets.
In this paper we improve the latter result, providing the first non trivial algorithm computing a minimum \ids of a
graph in time $O(\runtime^n)$.
Furthermore, we give a lower bound of $\Omega(1.3247^n)$ on the worst-case running time of this algorithm,
showing that the running time analysis is almost tight.
\end{abstract}

\section{Introduction}

During the last years the interest in the design of
exact exponential time algorithms has grown significantly.
Several nice surveys have been written on this subject.
In Woeginger's first survey \cite{Woeginger}, he presents the
major techniques used to design exact exponential time algorithms.
We also refer the reader to the survey of Fomin et al. \cite{SurveyFGK}
discussing some more recent techniques for the design and the analysis of exponential time algorithms. 
In particular, they discuss Measure \& Conquer and lower bounds.

In a graph $G=(V,E)$, a subset of vertices $S\subseteq V$ is \emph{independent} if no two vertices
of $S$ share an edge, and $S$ is \emph{dominating} if every vertex from $V\setminus S$ has at least one neighbor
in $S$.
In the \textsc{Maximum Independent Set} problem (\MISpb), the input is a graph and the task is to find a largest
independent set in this graph. In the \textsc{Minimum Dominating Set} problem (\MDSpb), the input is
a graph and the task is to find a smallest dominating set in this graph.
A natural and well studied combination of these two problems asks for a subset of vertices
of minimum cardinality that is both independent and dominating. This problem is called
\textsc{Minimum Independent Dominating Set} (\MIDSpb).
It is also known as
\textsc{Minimum Maximal Independent Set}, since every \ids is a maximal independent set.
Whereas there has been a lot of work on \MISpb and \MDSpb
in the field of exact algorithms, the best known exact algorithm for
\MIDSpb\ -- prior to our work -- trivially enumerates all maximal independent sets.

\smallskip

\noindent {\bf Known results.}
The \MISpb problem 
was among the first problems shown to be NP-hard \cite{GareyJ79}.
It is known that a maximum independent set of a graph on $n$ vertices can be computed in $O(1.4423^n)$ time
by combining a result due to Moon and Moser, who showed in 1965 
that the number of maximal independent sets of a graph is upper
bounded by $3^{n/3}$ \cite{MoonMoser} (see also \cite{MillerM60}), and a result due to Johnson, Yannakakis and Papadimitriou,
providing in \cite{Johnson} a polynomial delay algorithm to generate all maximal
independent sets.
Moreover many exact algorithms for this problem have been published, starting
in 1977 by an $O(1.2600^n)$ algorithm by Tarjan and Trojanowski \cite{Tarjan}.
To date, the fastest known exponential space algorithms for \MISpb have been
designed by Robson. His algorithm from 1986 \cite{Robson} has running time $O(1.2108^n)$ and
his unpublished computer-generated algorithm from 2001 \cite{RobsonTR} has
running time $O(1.1889^n)$.
Among the currently leading polynomial space algorithms, there is a very simple algorithm with running
time $O(1.2210^n)$ by Fomin et al. \cite{soda2006,AcmFGK} from 2006,
an $O(1.2132^n)$ time algorithm by Kneis et al. \cite{KneisFSTTCS2009} from 2009,
and a very recent $O(1.2127^n)$ time algorithm by Bourgeois et al. \cite{BourgeoisSWAT2010}.

\smallskip

The \MDSpb problem
is also well known to be NP-hard \cite{GareyJ79}.
Until 2004, the only known exact exponential time algorithm to solve \MDSpb
asked for trivially enumerating the $2^n$ subsets of vertices. The year 2004 saw a particular
interest in providing some faster algorithms for solving
this problem. Indeed, three papers with exact algorithms for \MDSpb were published.
In \cite{WG2004} Fomin et al. present an $O(1.9379^n)$ time algorithm, in
\cite{Randerath} Randerath and Schiermeyer establish an $O(1.8899^n)$ time algorithm
and Grandoni \cite{Grandoni} obtains an $O(1.8026^n)$ time algorithm.

In 2005, Fomin et al. \cite{icalp2005,AcmFGK} use the Measure \& Conquer approach to obtain an algorithm with
running time $O(1.5263^n)$
and using polynomial space. By applying a memorization technique they show that this running
time can be reduced to $O(1.5137^n)$ when allowing exponential space usage.
Van Rooij and Bodlaender \cite{vanRooijSTACS2008} further improved the polynomial-space algorithm
to $O(1.5134^n)$ and the exponential-space algorithm to $O(1.5063^n)$.
By now, the fastest published algorithm is due to Van Rooij et al. In \cite{vanRooijESA2009}, they provide
a $O(1.5048^n)$ time needing exponential space to solve the more general counting version of \MDSpb, 
i.e. the problem of computing the number of distinct minimum dominating sets.

\medskip

It is known that a minimum \ids (a \mids, for short) can be found in polynomial time for
several graph classes like interval graphs \cite{Chang}, chordal graphs \cite{Farber},
cocomparability graphs \cite{Kratsch} and AT-free graphs \cite{Broersma}, whereas the problem
remains NP-complete for bipartite graphs \cite{Corneil} and comparability
graphs \cite{Corneil}.
Concerning approximation results, Halld\'orsson proved in \cite{Halldorsson}
that there is no constant $\epsilon>0$ such that \MIDSpb can be approximated
within a factor of $n^{1-\epsilon}$ in polynomial time, assuming $P\not =NP$.
The same inapproximation result even holds for circle graphs and bipartite graphs~\cite{Damian}.

The problem has also been considered in parameterized approximability.
Downey et al. \cite{Downey} have shown 
that it is $W[2]$-hard to approximate $k$-\textsc{Independent Dominating Set}
with a factor $g(k)$, for any computable function $g(k) \ge k$.
In other words, unless $W[2]=FPT$,
there is no algorithm with running time $f(k) \cdot n^{O(1)}$ (where $f(k)$ is any computable
function independent of $n$) which either asserts that there is no \ids
of size at most $k$ for a given graph $G$, or otherwise asserts
that there is one of size at most $g(k)$, for any computable function $g(k) \ge k$.

The first exponential time algorithm for \MIDSpb has been observed by Randerath and Schiermeyer \cite{Randerath}.
They use the result due to Moon and Moser \cite{MoonMoser} as explained
previously and an algorithm enumerating all the maximal independent sets
to obtain an $O(1.4423^n)$ time algorithm for \MIDSpb.
In 2006, an earlier conference version of this paper claimed an
$O(1.3575^n)$ time algorithm \cite{WG2006}. However, a flaw concerning 
the main reduction rule was discovered
by the authors and is repaired in the present paper.
Very recently, Bourgeois et al. \cite{BourgeoisSIROCCO2010} proposed a branch-and-reduce $O(1.3417^n)$ time algorithm,
reusing several of the ideas introduced in \cite{WG2006}.

\smallskip

\noindent {\bf Our results.}
In this paper we present an $O(\runtime^n)$ time algorithm for solving \MIDSpb
using the Measure \& Conquer approach to analyze its running time.
As the bottleneck
of the algorithm in \cite{Randerath} are the vertices of degree two, we develop several methods
to handle them more efficiently such as marking some vertices and a
reduction described in Subsection \ref{sec:cliques} to a constraint satisfaction problem.
Combined with some elaborated branching rules, this enables
us to lower bound shrewdly the progress made by the algorithm at each
branching step, and thus to obtain a polynomial-space algorithm with running time $O(\runtime^n)$.
Furthermore, we obtain a very close lower bound of $\Omega(1.3247^n)$ on the running
time of our algorithm, which is very rare for non trivial exponential time algorithms.

\smallskip

This paper is organized as follows. In Section 2, we introduce the necessary concepts and definitions.
Section 3 presents the algorithm for \MIDSpb. We prove
its correctness and an upper bound on its worst-case running time in Section 4.
In Section 5, we establish a lower bound on its worst-case running time, which is very close
to the upper bound
and we conclude with Section 6.

\section{Preliminaries}

Let $G=(V,E)$ be an undirected and simple graph. For a vertex $v\in V$ we denote 
by $N(v)$ the neighborhood of $v$ and by $N[v]=N(v)\cup \{v\}$ the closed neighborhood of $v$.
The degree $d(v)$ of $v$ is the cardinality of $N(v)$. For a given subset of vertices
$S\subseteq V$, $G[S]$ denotes the subgraph of $G$ induced by $S$, $N(S)$ denotes the set of
neighbors in $V\setminus S$ of vertices in $S$ and $N[S] = N(S) \cup S$. 
We also define $N_S(v)$ as $N(v) \cap S$, $N_S[v]$ as $N[v] \cap S$, and $d_S(v)$ (called the $S$-\emph{degree} of $v$) as the cardinality of
$N_S(v)$. In the same way, given two subsets of vertices $S\subseteq V$ and $X\subseteq V$, we define
$N_S(X)=N(X) \cap S$.

A \emph{clique} is a set $S \subseteq V$ of pairwise adjacent vertices.
A graph $G=(V,E)$ is \emph{bipartite} if $V$ admits a partition into two
independent sets.
A bipartite graph $G=(V,E)$ is \emph{complete bipartite} if every vertex
of one independent set is adjacent to every vertex of the other independent set.
A \emph{connected component} of a graph is a maximal subset of vertices inducing a
connected subgraph.

\medskip

In a branch-and-reduce algorithm, a solution for the current problem instance is computed by
recursing on smaller subinstances such that
an optimal solution, if one exists, is computed for at least one subinstance.
If the algorithm considers only one subinstance in a given case, we speak of a reduction rule,
otherwise of a branching rule.

Consider a vertex $u \in V$ of degree two with two non adjacent neighbors $v_1$ and $v_2$.
In such a case, a branch-and-reduce algorithm will typically branch into three subcases when considering $u$:
either $u$, or $v_1$, or $v_2$ are in the solution set. In the third branch, one can consider that
$v_1$ is not in the solution set as the second branch considers all solution sets containing $v_1$. In order
to memorize that $v_1$ is not in the solution set but still needs to be dominated, we mark $v_1$.

\begin{definition}
A \emph{marked graph} $G=(F,M,E)$ is a triple where 
$F\cup M$ denotes the set of vertices of $G$ and
$E$ denotes the set of edges of $G$.
The vertices in $F$ are called \emph{free vertices} and the ones in $M$
\emph{marked vertices}.
\end{definition}

\begin{definition}
\label{defidsmarked}
Given a marked graph $G=(F,M,E)$, an \ids $\mathcal{D}$
of $G$ is a subset of free vertices such that
$\mathcal{D}$ is an \ids of the graph $(F\cup M, E)$.
\end{definition}

\begin{remark}
It is possible that such an \ids does not exist in a marked graph,
for example if some marked vertex has no free neighbor.
\end{remark}

\noindent Finally, we introduce the notion of an \emph{induced marked subgraph}.

\begin{definition}
Given a marked graph $G=(F,M,E)$ and two subsets $S,T \subseteq (F\cup M)$, an
\emph{induced marked subgraph} $G[S,T]$ is the marked graph
$G'=(S, T, E')$
where $E' \subseteq E$ are the edges of $G$ with
both end points in $S \cup T$.
\end{definition}

\noindent Notions like neighborhood and degree in a marked graph $(F,M,E)$ are
the same as in the corresponding simple graph $(F \cup M,E)$.

\section{Computing a \mids on Marked Graphs}

In this section we present an algorithm
solving \MIDSpb on marked graphs, assuming that no marked vertex has $F$-degree larger than $4$.

{F}rom the previous definitions it follows that
a subset $\mathcal{D}\subseteq V$ is a \mids of a graph $G'=(V,E)$
if and only if $\mathcal{D}$ is a \mids of the
marked graph $G=(V,\emptyset,E)$. Hence the algorithm
of this section is able to solve the problem on simple graphs as well.
Also due to the definitions, edges incident to two marked vertices
are irrelevant; throughout this paper we assume that there are no such edges.

\medskip

Given a marked graph $G=(F,M,E)$, consider the graph $G[F]$ induced by
its free vertices. In the following subsection we consider the special case
when $G[F]$ is a disjoint union of cliques with some additional properties.

\subsection{$G[F]$ is a disjoint union of cliques}
\label{sec:cliques}

Assume in this subsection that the graph $G[F]$ is a disjoint union of cliques such that:
\begin{itemize}
\item each clique has size at most $4$, and
\item each marked vertex has at most $4$ free neighbors.
\end{itemize}

We will transform this instance $G=(F,M,E)$ of \MIDSpb into an instance
$(X,D,C)$ of the Constraint Satisfaction Problem (\CSPpb). Let us
briefly recall some definitions about \CSPpb.
Given a finite set $X= \{x_1, x_2, \dots, x_n\}$ of $n$ variables over domains $D(x_i)$,
$1\leq i \leq n$, and a set $C$ of $q$ constraints, \CSPpb asks for
an assignment of values to the variables, such that each variable is assigned a value from its domain,
satisfying all the constraints.
Formally, $(d,p)$-\CSPpb is defined as follows:

\begin{description}
\item[Input:] $(X,D,C)$ where $X=\{x_1, x_2, \dots , x_n\}$
is a finite set of variables over domains $D(x_i)$ of
cardinality at most $d$, $1 \leq i \leq n$,
and $C=\{c_1, c_2, ..., c_q\}$ is a set of constraints.
Each constraint $c_i \in C$ is a couple $\langle t_i,R_i\rangle$
where $t_i= \langle x_{i_1}, x_{i_2}, \dots, x_{i_j} \rangle$ is a $j$-tuple of variables, with $j \le p$,
and $R_i$ is a set of $j$-tuples of values over $D(x_{i_1}) \times D(x_{i_2}) \times \dots \times D(x_{i_j})$.

\item[Question:] Is there a function $f$ assigning to each variable $x_i \in X$, $1\leq i \leq n$,
a value of $D(x_i)$ such that for each constraint $c_i$, $1 \leq i \leq q$,
$\langle f(x_{i_1}), ..., f(x_{i_j}) \rangle \in R_i$ ?
\end{description}

Given a marked graph $G=(F,M,E)$ fulfilling the previous conditions,
we describe the construction of a $(4,4)$-\CSPpb instance.
We label the cliques $K_1, K_2, \dots, K_l$ of $G[F]$ respectively by $x_1, x_2, \dots, x_l$.
For each clique $K_i$, $1\leq i \leq l$, label its vertices from $v_i^1$ to $v_i^{|K_i|}$.
For each variable $x_i$, $1 \leq i \leq l$, we define its domain as $D(x_i)=\{1,2, \dots, |K_i|\}$.

Let $u_i \in M$ be a marked vertex and let $v_{i_1}^{k_1}, v_{i_2}^{k_2}, \dots, v_{i_j}^{k_j}$
be the free neighbors of $u_i$. Thus, $j\leq 4$.
Let $t_i=\langle x_{i_1}, x_{i_2}, \dots, x_{i_j}\rangle$ be the $j$-tuple of variables corresponding respectively
to the cliques containing $v_{i_1}^{k_1}, v_{i_2}^{k_2}, \dots, v_{i_j}^{k_j}$.
Let $R_i$ be the set of all $j$-tuples $\langle w_{i_1}, w_{i_2}, \dots, w_{i_j}\rangle$
over $D(x_{i_1}) \times D(x_{i_2}) \times \dots \times D(x_{i_j})$
such that for at least one $r$, $1\leq r \leq j$, the value of $w_{i_r}$ is $k_r$
and $\{u,v_{i_r}^{k_r}\}$ is an edge of the graph.

Finally, each marked vertex $u_i$ leads to a constraint $\langle t_i,R_i\rangle$
of the set $C$. Due to the conditions on the given marked graph,
the size of the domain of each variable is at most $4$
and the number of variables involved in each constraint is at most $4$.

%
%


We now use the following theorem of Angelsmark \cite{Angelsmark05}
showing that it is possible to restrict our attention to
$(2,4)$-\CSPpb.

\begin{theorem}[Theorem~11 of \cite{Angelsmark05}]
If there exists a deterministic $O(\alpha^n)$ time algorithm for
solving $(e,p)$-\CSPpb, then for all $d>e$, there exists a deterministic
$O((d/e + \epsilon)^n \alpha^n)$ time algorithm for solving
$(d,p)$-\CSPpb, for any $\epsilon >0$.
\end{theorem}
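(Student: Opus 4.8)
The plan is to reduce a single $(d,p)$-\CSPpb instance on $n$ variables to a family of $(e,p)$-\CSPpb instances, each obtained by shrinking the domain of every variable to a set of size at most $e$, and then to run the assumed $O(\alpha^n)$ algorithm on every member of the family. The naive realisation covers each domain $D(x_i)$ (of size at most $d$) by $\lceil d/e\rceil$ subsets of size $e$, chooses one such subset per variable, and observes that a satisfying assignment of the original instance survives precisely in the subinstance whose chosen subsets contain the assigned values. This already yields an $O(\lceil d/e\rceil^n\,\alpha^n)$ algorithm; the real difficulty is that $\lceil d/e\rceil$ can be as large as $d/e+1$, whereas the theorem demands a base arbitrarily close to $d/e$.

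To recover the lost factor I would amortise the covering over blocks of variables. Fix a block length $m$ and partition the $n$ variables into $\lceil n/m\rceil$ blocks. For one block the joint assignments live in $\{1,\dots,d\}^m$, and the aim is to cover this product by boxes $S_1\times\cdots\times S_m$ with every $|S_r|\le e$; restricting the $m$ variables of a block to such a box gives them domains of size at most $e$, and restricting each constraint $\langle t_i,R_i\rangle$ to the chosen boxes keeps its arity at most $p$, so every resulting subinstance is genuinely $(e,p)$-\CSPpb. Choosing one box per block and combining over all blocks produces the family; if a block can be covered by $N_m$ boxes, the family has at most $N_m^{\lceil n/m\rceil}$ members (the final, shorter block contributing only a constant factor for fixed $m$), each solved in $O(\alpha^n)$ time, for a total of $O(N_m^{n/m}\,\alpha^n)$ up to polynomial factors.

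The key step, and the main obstacle, is therefore a covering lemma: $\{1,\dots,d\}^m$ admits a covering by boxes of the above form with $N_m^{1/m}\to d/e$ as $m\to\infty$. I would prove this by the probabilistic method. Draw a box at random by choosing, independently for each coordinate $r$, a uniformly random $e$-subset $S_r\subseteq\{1,\dots,d\}$; a fixed point $(a_1,\dots,a_m)$ then lies in the box with probability $\prod_{r=1}^m \Pr[a_r\in S_r]=(e/d)^m$. For $t$ independent boxes the expected number of uncovered points is $d^m(1-(e/d)^m)^t\le d^m\exp(-t(e/d)^m)$, which drops below $1$ once $t=O\!\big(m\ln d\cdot(d/e)^m\big)$. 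Hence some covering of this size exists, so $N_m\le O(m\ln d)\,(d/e)^m$ and $N_m^{1/m}=(d/e)\,(O(m\ln d))^{1/m}\to d/e$. Since for fixed $d,e,m$ this covering is a constant-size combinatorial object independent of $n$, it can be fixed once and for all (even found by exhaustive search), so the enumeration of subinstances is deterministic.

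Finally I would tie the pieces together. Given $\epsilon>0$, choose $m$ large enough that $N_m^{1/m}\le d/e+\epsilon$; the reduction then solves $(d,p)$-\CSPpb in $O((d/e+\epsilon)^n\,\alpha^n)$ time. Correctness holds in both directions: any assignment satisfying the original instance is captured, block by block, by the boxes of the covering containing its values, hence satisfies some subinstance, while any solution of a subinstance uses only values from the restricted (and thus original) domains and satisfies the same constraints, hence solves the original instance. The remainder block and the polynomial overhead of materialising each subinstance affect only lower-order factors and are absorbed into $\epsilon$.
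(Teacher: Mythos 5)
Your proposal is correct and takes essentially the same route as the constructive proof the paper cites and sketches: split the variables into constant-size blocks, cover each block's joint assignment space by boxes whose sides are subdomains of size at most $e$ (existence by a probabilistic/counting argument, found deterministically by brute force since the block size is a constant depending only on $d$, $e$, $\epsilon$), and run the $(e,p)$-\CSPpb algorithm once per choice of boxes. The only difference is a refinement, not a correction: the paper's sketch records the finer bound $\Pi_{i>e}(i/e+\epsilon)^{n_i}$, with a per-variable factor depending on that variable's own domain size (which the paper later exploits in Corollary~\ref{cor:csp}), and your argument yields this too if you simply group variables of equal domain size into common blocks rather than blocking arbitrarily.
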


The constructive proof of this theorem
shows how to transform a $(d,p)$-\CSPpb instance on $n$ variables into a set of
$(e,p)$-\CSPpb instances on at most $n$ variables each, such that the $(d,p)$-\CSPpb instance has a solution
if and only if at least one of the $(e,p)$-\CSPpb instances has a solution. The number
of $(e,p)$-\CSPpb instances of this construction is bounded by $\Pi_{i>e} (i/e + \epsilon)^{n_i} \le (d/e+\epsilon)^{n}$,
where $n_i$ is the number of variables with domain size $i$ in the $(d,p)$-\CSPpb instance and $\epsilon >0$ can
be taken arbitrarily small.

We use this construction to transform our $(4,4)$-\CSPpb instance into a set of $\Pi_{i>2} (i/2 + \epsilon)^{N_i}$
$(2,4)$-\CSPpb instances, where $N_i$ is the number of cliques of size $i$ in $G[F]$.
Then, it is not hard to see that
there exists a \mids for $G$ if and only if at least one of the $(2,4)$-\CSPpb instances has
an assignment of the variables which satisfies all the constraints of this \CSPpb instance.
Given a satisfying assignment $f$ to such a \CSPpb instance,
the set $\bigcup_{i=1}^l \{ v_{i}^{f(x_i)} \}$
is a solution to \MIDSpb for $G$.
We obtain the following theorem.


\begin{theorem}
Let $N_2$, $N_3$ and $N_4$ be the number of variables (i.e. the number of cliques of $G[F]$)
with domain size (resp. of size) $2$, $3$ and $4$, respectively.
The corresponding \CSPpb instance can be solved in time
$O((4/2 + \epsilon)^{N_4} \cdot (3/2 + \epsilon)^{N_3} \cdot \alpha^{N_4+N_3+N_2})$
where $O(\alpha^n)$ is the running time needed to solve a $(2,4)$-\CSPpb instance on $n$ variables,
for any $\epsilon >0$.
\end{theorem}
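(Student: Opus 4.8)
The plan is to read the claimed bound off Angelsmark's construction directly, treating the statement as a corollary of the reduction from $(4,4)$-\CSPpb to $(2,4)$-\CSPpb described just above. First I would invoke Theorem~11 of \cite{Angelsmark05} with $d=4$, $e=2$ and $p=4$: its constructive proof turns our $(4,4)$-\CSPpb instance into a family $\mathcal{F}$ of $(2,4)$-\CSPpb instances, each on at most $l$ variables, such that the original instance is satisfiable if and only if some member of $\mathcal{F}$ is. Consequently one can decide the $(4,4)$-\CSPpb instance by producing every instance of $\mathcal{F}$ and running the assumed $O(\alpha^n)$-time $(2,4)$-\CSPpb algorithm on each; up to a polynomial factor for the construction, the total running time is the number of instances in $\mathcal{F}$ multiplied by the time spent on a single instance.

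Next I would bound $|\mathcal{F}|$. By the construction the number of produced instances is at most $\prod_{i>2}(i/2+\epsilon)^{N_i}$, where $N_i$ counts the variables of domain size $i$. Since every clique of $G[F]$ has size at most $4$, the variables $x_i$ have domain $D(x_i)=\{1,\dots,|K_i|\}$ of size at most $4$, so the only domain sizes exceeding $2$ are $3$ and $4$. Hence the product collapses to $(3/2+\epsilon)^{N_3}\,(4/2+\epsilon)^{N_4}$, which already supplies the two base factors appearing in the statement.

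It remains to bound the time spent on a single instance of $\mathcal{F}$ by $O(\alpha^{N_2+N_3+N_4})$. Each such instance is a $(2,4)$-\CSPpb on at most $l=N_1+N_2+N_3+N_4$ variables, where $N_1$ counts the size-$1$ cliques of $G[F]$. A variable coming from a size-$1$ clique has domain $\{1\}$, i.e. a single admissible value; I would eliminate all such variables by a polynomial-time preprocessing step that fixes their forced value and propagates it into the constraints that mention them. This leaves an equivalent $(2,4)$-\CSPpb on $N_2+N_3+N_4$ variables, which by hypothesis is solved in $O(\alpha^{N_2+N_3+N_4})$ time. Multiplying the bound on $|\mathcal{F}|$ by this per-instance cost, and absorbing the polynomial overheads into $\epsilon$, yields the claimed running time $O((4/2+\epsilon)^{N_4}\,(3/2+\epsilon)^{N_3}\,\alpha^{N_2+N_3+N_4})$.

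The one point that needs genuine care, rather than routine bookkeeping, is precisely this elimination of the domain-$1$ variables: the exponent of $\alpha$ in the statement is $N_2+N_3+N_4$ and not the total number $l$ of cliques, so one must argue that the size-$1$ cliques contribute only a forced assignment and therefore cost nothing in the exponential term. In the \MIDSpb interpretation this is the familiar fact that an isolated free vertex must itself belong to every independent dominating set in order to be dominated. With that observation in place, the remaining steps are direct substitutions into Angelsmark's bound.
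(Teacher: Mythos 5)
Your proposal is correct and follows essentially the same route as the paper: the paper treats this theorem as an immediate consequence of Angelsmark's construction (which bounds the number of generated $(2,4)$-\textsc{CSP} instances by $\prod_{i>2}(i/2+\epsilon)^{N_i} = (3/2+\epsilon)^{N_3}(4/2+\epsilon)^{N_4}$) combined with running the assumed $O(\alpha^n)$ solver on each instance, offering no further argument. Your explicit polynomial-time elimination of the domain-$1$ variables (size-$1$ cliques of $G[F]$, whose vertices are forced into every independent dominating set) is a detail the paper leaves implicit, but it is exactly what justifies the exponent $N_2+N_3+N_4$ on $\alpha$ rather than the total number of cliques, so making it explicit refines rather than departs from the paper's argument.
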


The theorem can be combined with the following result of Moser and Scheder \cite{MoserS10}
providing an algorithm for solving $(2,4)$-\CSPpb.


\begin{theorem}[\cite{MoserS10}]\label{thm:MoserS10}
Any $(2,4)$-\CSPpb instance can be solved deterministically in time $O((1.5+\epsilon)^n)$, for any $\epsilon >0$.
\end{theorem}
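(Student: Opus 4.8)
The plan is to reduce the statement to Sch\"oning's random-walk local search, analyze it for the special case of domain size $d=2$ and constraint arity $p=4$, and then derandomize. Since every variable has a domain of size at most $2$, I would first identify each $(2,4)$-\CSPpb instance with a Boolean formula whose ``clauses'' are the at-most-$4$-ary constraints, a constraint being \emph{violated} precisely when the current assignment restricted to its scope lies outside its relation $R_i$. This lets me run the following randomized procedure: pick a uniformly random assignment $a \in \{0,1\}^n$; then repeat $O(n)$ times the step ``if $a$ satisfies all constraints, output it, otherwise pick any violated constraint, pick one of its at most $4$ variables uniformly at random, and flip its value''; if no satisfying assignment is found, restart from a fresh random assignment.

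For the analysis I would fix a satisfying assignment $a^\ast$ (assuming one exists) and track the Hamming distance between the current assignment and $a^\ast$. Because the chosen constraint is violated by $a$ but satisfied by $a^\ast$, the two assignments disagree on at least one of its at most $4$ variables; hence with probability at least $1/4$ the flipped variable is one on which they disagree, and since the domain has size $2$ the flip strictly decreases the distance to $a^\ast$. The distance therefore performs a random walk that moves toward $0$ with probability at least $1/4$ at each step, so the ratio of down-steps to up-steps is at least $1:3$. Standard gambler's-ruin estimates give that, from an initial distance $j$, the walk reaches $0$ within $O(n)$ steps with probability at least $(1/3)^j$ up to a polynomial factor. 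Averaging over the initial distance, which is $\mathrm{Binomial}(n,1/2)$-distributed, the per-restart success probability is at least $\mathrm{poly}(n)^{-1}\,2^{-n}\sum_j \binom{n}{j}(1/3)^j = \mathrm{poly}(n)^{-1}(2/3)^n$. Consequently $\mathrm{poly}(n)\cdot(3/2)^n$ restarts suffice in expectation and each restart runs in polynomial time, yielding a randomized algorithm of expected running time $O((1.5+\epsilon)^n)$ for any $\epsilon>0$.

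The remaining, and genuinely hard, step is to remove the randomness while losing only an arbitrarily small exponential factor, so as to reach the \emph{deterministic} bound $O((1.5+\epsilon)^n)$. Here I would appeal to the derandomization technique of Moser and Scheder: the fresh random restarts are replaced by a deterministically constructed covering code whose Hamming balls of an appropriate radius cover $\{0,1\}^n$, and the random walk started from each code word is simulated deterministically by enumerating only a pruned family of candidate flip sequences whose total number is controlled by the same $(1/3)^j$-type estimate used in the probabilistic analysis. The key difficulty is that a naive deterministic simulation branches into up to $4$ choices per step and blows up the running time; the content of the derandomization is to show that one can enumerate a near-optimal family of walks so that the product of ``number of code words'' and ``number of walks per code word'' stays within a $(1+\epsilon)^n$ factor of the randomized bound. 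I expect this derandomization to be the main obstacle, whereas the random-walk analysis above becomes routine once the Boolean, arity-$4$ structure is made explicit.
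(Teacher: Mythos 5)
There is nothing in the paper to compare your argument against line by line: Theorem~\ref{thm:MoserS10} is not proved in this paper at all, but imported verbatim from Moser and Scheder \cite{MoserS10}, so the paper's ``proof'' is the citation itself. The question is therefore whether your sketch would stand as a proof of the cited result. Its first half does, and it correctly reconstructs the structure of the actual argument in \cite{MoserS10}: for domain size $2$, treating arbitrary $4$-ary constraints as clause-like objects is sound, since a constraint violated by the current assignment $a$ but satisfied by $a^\ast$ must contain a variable on which $a$ and $a^\ast$ disagree, and flipping such a variable (there is only one other value) strictly decreases the Hamming distance. Your gambler's-ruin estimate $(1/3)^j$ up to polynomial factors, and the averaging $2^{-n}\sum_j \binom{n}{j}(1/3)^j = (2/3)^n$, are the correct generalization of Sch\"oning's analysis, giving a randomized $\mathrm{poly}(n)\cdot(3/2)^n$ algorithm. (The complication that arises for domain size $d>2$, namely that the walk must also guess the correct new value, disappears here precisely because $d=2$, as you observe.)

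The second half, however, is circular as written: ``appeal to the derandomization technique of Moser and Scheder'' is not a proof step --- it \emph{is} the theorem. The entire content of \cite{MoserS10} consists of the two deterministic ingredients you name but do not supply: (i) a deterministically constructible covering code of radius $\rho n$ with roughly $2^{(1-H(\rho))n}$ codewords, where $H$ is the binary entropy function (this part was already available from Dantsin et al.\ \cite{Dantsin}, via a block-wise greedy set-cover construction), and (ii) --- the genuinely new contribution --- a deterministic search of a Hamming ball of radius $r$ in time $(3+\epsilon)^r$ rather than the naive $4^r$ obtained by branching on all variables of a violated constraint. With the naive $4^r$ ball search one only gets $O(1.6^n)$ overall, which is exactly the weaker Dantsin et al.\ bound that the present paper mentions as a fallback after Corollary~\ref{cor:csp}; balancing $2^{(1-H(\rho))n}(3+\epsilon)^{\rho n}$ over $\rho$ (optimum $\rho = 1/4$) is what yields $(1.5+\epsilon)^n$. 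The pruned enumeration of flip sequences achieving $(3+\epsilon)^r$ is a delicate recursive construction and is the real mathematical obstacle; without proving it, your argument establishes only the randomized bound and the deterministic bound $O((1.6)^n)$, not the claimed $O((1.5+\epsilon)^n)$.
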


\begin{corollary}\label{cor:csp}
Let $G=(F,M,E)$ be a marked graph such that
$G[F]$ is a disjoint union of cliques
of size at most $4$, and each marked vertex has $F$-degree at most 4.
Let $N_i$, $1\leq i \leq 3$, be the number of free vertices with $i$ free neighbors in $G$
(thus $G[F]$ has $N_i$ cliques of size $i+1$).
A \mids, if one exists, can be computed in time
$O( (1.5+\epsilon)^{N_1/2} \cdot (2.25 + \epsilon)^{N_2/3} \cdot (3+\epsilon)^{N_3/4})$
or it can be decided within the same running time that the marked graph has no \mids, for
any $\epsilon >0$.
\end{corollary}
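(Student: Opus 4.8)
The plan is to feed the present instance directly into the $(4,4)$-\CSPpb construction described above and then chain together the three results already in place: the Angelsmark reduction to $(2,4)$-\CSPpb, the running-time theorem stated immediately before Theorem~\ref{thm:MoserS10}, and the Moser--Scheder solver of Theorem~\ref{thm:MoserS10}. The hypotheses of the corollary---$G[F]$ a disjoint union of cliques of size at most $4$ and every marked vertex of $F$-degree at most $4$---are precisely the conditions under which that construction is valid, so no extra setup is needed: the free vertices of each clique become the admissible values of one \CSPpb variable, and each marked vertex yields one constraint demanding that at least one of its free neighbours be chosen. A satisfying assignment $f$ and a \mids correspond via $\bigcup_i\{v_i^{f(x_i)}\}$, as noted in the text, so solving the \CSPpb instance computes a \mids when one exists and refuting it certifies that none exists; this settles correctness and the ``no \mids'' case simultaneously, and the whole reduction is polynomial in $n$, hence absorbed into the stated time bound.

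The substance is the running-time accounting, where the only genuine care is a translation between two different ways of counting. The preceding theorem is phrased in terms of the number of \emph{cliques} of each size, whereas the corollary counts \emph{free vertices} by their number of free neighbours. A free vertex has exactly $i$ free neighbours precisely when it lies in a clique of size $i+1$, so such a clique contributes $i+1$ vertices of this type; hence the numbers of size-$2$, size-$3$ and size-$4$ cliques equal $N_1/2$, $N_2/3$ and $N_3/4$ respectively. I would substitute these three quantities for the clique counts in the preceding theorem and set $\alpha = 1.5+\epsilon$ by Theorem~\ref{thm:MoserS10}.

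It then remains to collapse the product of bases. Each size-$4$ clique carries the factor $(4/2+\epsilon)=(2+\epsilon)$ from the Angelsmark step together with the solver factor $(1.5+\epsilon)$, giving combined base $(2+\epsilon)(1.5+\epsilon)=3+O(\epsilon)$; each size-$3$ clique carries $(3/2+\epsilon)(1.5+\epsilon)=2.25+O(\epsilon)$; and each size-$2$ clique carries only the solver factor $1.5+\epsilon$. Raising these to the powers $N_3/4$, $N_2/3$ and $N_1/2$ reproduces exactly the claimed bound. The one point requiring attention is the $\epsilon$-management: the $\epsilon$ of the Angelsmark reduction and the $\epsilon$ of Theorem~\ref{thm:MoserS10} are independent, and their product contributes cross terms, so for any target $\epsilon>0$ in the statement one first fixes how close to $3$, $2.25$ and $1.5$ each base must be and then chooses the two source $\epsilon$'s small enough to meet those thresholds. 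This is the only mild obstacle; everything else is a direct concatenation of the results already established.
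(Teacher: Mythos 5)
Your proposal is correct and follows essentially the same route as the paper: instantiate the $(4,4)$-\CSPpb construction, apply the Angelsmark-based theorem with $\alpha = 1.5+\epsilon$ from Theorem~\ref{thm:MoserS10}, and convert clique counts to vertex counts ($N_1/2$, $N_2/3$, $N_3/4$ cliques of sizes $2$, $3$, $4$) to get the stated bound. Your explicit handling of the two independent $\epsilon$'s is a nice touch that the paper leaves implicit.
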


We remark that the procedure of Corollary \ref{cor:csp} will not be a bottleneck in the final
running time analysis of our algorithm, even if we use the $1.6^n \cdot n^{O(1)}$ by Dantsin et al. \cite{Dantsin}
to solve $(2,4)$-\CSPpb instances instead of Theorem \ref{thm:MoserS10}.

\subsection{The Algorithm}
\label{sec:algo}

In this subsection, we give Algorithm {\bf ids} computing the size of a \mids
of a marked graph. Although the number of branching rules is quite large
it is fairly simple to check that the algorithm computes the size of
a \mids (if one exists). It is also not difficult to transform {\bf ids} into
an algorithm that actually outputs a \mids.
In the next section we prove the correctness and give a detailed analysis
of the running time of Algorithm {\bf ids}.

Once it has selected a vertex $u$, the algorithm makes recursive calls (that is, it branches)
on subinstances of the marked graph. There are different ways the algorithm branches and we
give the most common ones now. Let $v_1,\ldots ,v_{d_F(u)}$ denote the free neighbors of $u$, ordered by increasing $F$-degree.
The branching procedure $\brancha(G,u)$ explores all possibilities that $u$ or a free neighbor of $u$ is in the solution set. It returns
\begin{align*}
 1+\min_{v\in N_F[u]} \{ \mathbf{ids}(G[F\setminus N[v],M\setminus N(v)]) \}.
\end{align*}
The branching procedure $\branchm(G,u)$ additionally makes sure that the free neighbors of $u$ are considered by
increasing $F$-degree and when considering the possibility that $v_i$ is in the solution set, it marks all vertices $v_j$, $j<i$.
It returns
\begin{align*}
 1+\min \begin{cases}
	      \mathbf{ids}(G[F\setminus N[u],M\setminus N(u)]);\\
	      \min_{i=1..d_F(u)} \begin{cases}\mathbf{ids}(G[F\setminus (N[v_i] \cup \{v_1,\ldots,v_{i-1}\}),\\
	      \quad\quad\quad (M\cup \{v_1,\ldots,v_{i-1}\})\setminus N(v_i)]). \end{cases}
	    \end{cases}
\end{align*}
Finally, the branching procedure $\brancho(G,u)$ considers the two possibilities where $u$ is in the solution set or
where $u$ is not in the solution set. In the recursive call corresponding to the second possibility, $u$ is marked.
The procedure returns
\begin{align*}
 \min \begin{cases}
	      1+\mathbf{ids}(G[F\setminus N[u],M\setminus N(u)]);\\
	      \mathbf{ids}(G[F\setminus \{u\}, M\cup \{u\}]).
	    \end{cases}
\end{align*}


\begin{algorithm}[htbp]
{\small
\DontPrintSemicolon
\SetVlineSkip{0.44pt}
\Indm
\textbf{Algorithm ids($G$)\\}
\KwIn{A marked graph $G=(F,M,E)$ with $d_F(v)\le 4$ for each $v\in M$.}
\KwOut{The size of a \mids of $G$.}
\SetKwComment{LabRule}{}{}
\BlankLine
\Indp
   \If{$\exists u\in M \text{ s.t. } d_F(u)=0$} {
     \Return $\infty$\LabRule*[r]{$(1)$}
   }
   \ElseIf{$G[F]$ is a disjoint union of cliques} {
      \If{$\exists u\in F \text{ s.t. } d_F(u) \ge 5$} {
        \Return $\brancha(G,u)$\LabRule*[r]{$(2)$}
      }
      \ElseIf{$\exists u\in F \text{ s.t. } d_F(u) = 4$} {
        \Return $\brancho(G,u)$\LabRule*[r]{$(3)$}
      }
      \Else {
        \Return the solution determined by the algorithm of Corollary~\ref{cor:csp}\LabRule*[r]{$(4)$}
      }
   }
   \ElseIf{$\exists u \in M \text{ s.t. } d_F(u)=1$} {
      let $v$ be the free neighbor of $u$\;
      \Return $1+\mathbf{ids}(G[F \setminus N[v],M \setminus N(v)])$\LabRule*[r]{$(5)$}
   }
   \ElseIf{$\exists \text{ a connected component } B \text{ of } G[F] \text{ s.t. } |B|>2 \land G[B] \text{ is complete bipartite }$} {
      let $B$ be partitioned into two independent sets $X$ and $Y$\;
      \Return $\min \begin{cases}
	     |X|+\mathbf{ids}(G[F \setminus N[X],M \setminus N(X)]);\\
		 |Y|+\mathbf{ids}(G[F \setminus N[Y],M \setminus N(Y)])\end{cases}$\LabRule*[r]{$(6)$}
   }
   \ElseIf{$\exists C \subseteq F \text{ s.t. } |C| = 3 \land C \text{ is a clique } \land \exists! v \in C \text{ s.t. } d_F(v) \geq 3$} {
      \Return $\min \{
	      1+\mathbf{ids}(G[F\setminus N[v],M\setminus N(v)]);
	      \mathbf{ids}(G[F\setminus \{v\}, M])
	    \}$\LabRule*[r]{$(7)$}
   }
   \Else {
      choose $u \in F$ such that\;
      \Indp
        (a) $u$ is not contained in a connected component in $G[F]$ that is a clique,\;
        (b) according to (a), $u$ has minimum $F$-degree, and\;
        (c) according to (a) and (b), $u$ has a neighbor in $F$ of maximum $F$-degree.\;
      \Indm
      \If{$d_F(u) = 1$} {
         \Return $\brancha(G,u)$\LabRule*[r]{$(8)$}
      }
      \ElseIf{$d_F(u) = 2$}{
         \If{$u$ has a neighbor of $F$-degree at most $4$}{
             \Return $\branchm(G,u)$\LabRule*[r]{$(9)$}
         }
         \Else{
             \Return $\brancha(G,u)$\LabRule*[r]{$(10)$}
         }
      }
      \ElseIf{$d_F(u)=3$} {
         \If{all free neighbors of $u$ have $F$-degree 3}{
             Let $v\in N_F[u]$ such that $G[N_F(v)]$ has at most $1$ edge\;
             \Return $\brancho(G,v)$\LabRule*[r]{$(11)$}
         }
         \ElseIf{$u$ has a neighbor $v$ of $F$-degree 4} {
             \Return $\brancho(G,v)$\LabRule*[r]{$(12)$}
         }
         \ElseIf{$u$ has a neighbor $v$ of $F$-degree 5} {
            \Return $\min \begin{cases}
	           1+\mathbf{ids}(G[F \setminus N[u],M \setminus N(u)]);\\
               1+\mathbf{ids}(G[F \setminus N[v],M \setminus N(v)]);\\
			   \mathbf{ids}(G[F \setminus \{u,v\}, M \cup \{u,v\}])
			\end{cases}$\LabRule*[r]{$(13)$}
         }
         \ElseIf{$u$ has two free neighbors of $F$-degree 3} {
            \If{$N_F(u)$ is a clique} {
            	Let $v_3 \in N_F(u)$ with maximum $F$-degree\;
                \Return $\min \{
	                  1+\mathbf{ids}(G[F\setminus N[v_3],M\setminus N(v_3)]);
                      \mathbf{ids}(G[F\setminus \{v_3\}, M])
                \}$\LabRule*[r]{$(14)$}
            }
            \Else {
                \Return $\branchm(G,u)$\LabRule*[r]{$(15)$}
            }
	     }
         \Else {
            \Return $\brancha(G,u)$\LabRule*[r]{$(16)$}
         }
      }
      \ElseIf{$d_F(u)=4$} {
         \Return $\brancho(G,u)$\LabRule*[r]{$(17)$}
      }
      \Else(\tcp*[h]{$d_F(u) \ge 5$}){
         \Return $\brancha(G,u)$\LabRule*[r]{$(18)$}
      }
   }
}
\end{algorithm}

The branching procedure $\brancha$ is favored over $\branchm$ if $\branchm$ would create marked vertices of degree at least $5$.
Thus, starting with a graph where all the
marked vertices have $F$-degree at most $4$, Algorithm {\bf ids} will keep this invariant.
This property allows us to use the procedure described in the previous subsection whenever the
graph induced by its free vertices is a collection of cliques of size at most $4$.
The correctness and running time analysis of {\bf ids} are described in the next section.

\section{Correctness and Analysis of the Algorithm}
\label{sec:correctanaly}

In our analysis, we assign so-called weights to free vertices.
{F}ree vertices having only marked neighbors can be handled without
branching. Hence, it is an advantage when the $F$-degree of a vertex decreases.
The weights of the free vertices will therefore depend on their
$F$-degree.

Let $n_i$ denote the number of free vertices having $F$-degree $i$.
For the running time analysis we consider the following measure of the size of $G$:
\begin{equation*}
 k=k(G)=\sum_{i \geq 0} w_i n_i \leq n
\end{equation*}
with the weights $w_i \in [0,1]$.
In order to simplify the running time analysis, we make the following assumptions:

\begin{itemize}
\item $w_0=0$,
\item $w_i=1$ for $i \geq 3$,
\item $w_1 \leq w_2$, and
\item $\Delta w_1 \geq \Delta w_2 \geq \Delta w_3$ where $\Delta w_i = w_i-w_{i-1}, i \in \{1,2,3\}$.
\end{itemize}

\begin{theorem}
\label{upperbound}
Algorithm {\bf ids} solves \MIDSpb in time $O(\runtime^n)$.
\end{theorem}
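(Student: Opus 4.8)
The plan is to establish the $O(\runtime^n)$ bound by the standard Measure \& Conquer framework. Since $k(G) \le n$ by the constraint $w_i \in [0,1]$, it suffices to show that Algorithm \textbf{ids} runs in time $O(\runtime^{k(G)})$. As the algorithm is a branch-and-reduce procedure doing polynomial work per node, I would argue that the number of leaves $L(k)$ in the recursion tree on an instance of measure $k$ satisfies $L(k) = O(c^k)$ for $c = \runtime$, whence the total running time is $O(\runtime^n)$ up to a polynomial factor. The technical heart is verifying, for \emph{every} branching rule $(2)$--$(18)$ in the algorithm, an associated recurrence of the form $c^k \ge \sum_j c^{k - \delta_j}$, where the $\delta_j \ge 0$ are the reductions in measure along each of the branches. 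The leaf cases $(1)$ and $(5)$ reduce the instance without branching, and the clique base case $(4)$ is handled in polynomial-in-the-CSP-size time by Corollary~\ref{cor:csp}, so I must separately check that this CSP subroutine does not dominate; this is precisely the remark following Corollary~\ref{cor:csp}, which I would invoke to dismiss it.

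First I would fix the correctness argument, which is largely a case check: each reduction rule preserves the existence and size of a \mids (using Definition~\ref{defidsmarked} on marked graphs), and each branching rule is exhaustive, i.e. its subcases cover all optimal solutions. The subtle point is the \emph{marking} mechanism in $\branchm$ and $\brancho$: when we branch on whether $v_i$ enters the solution and decide it does not, we mark the already-considered neighbors $v_1,\dots,v_{i-1}$ to record that they are excluded from $\mathcal{D}$ yet must still be dominated. I would verify that no optimal \mids is lost by this bookkeeping, and that rule $(1)$ correctly returns $\infty$ exactly when a marked vertex has become impossible to dominate (no free neighbor), consistent with the Remark after Definition~\ref{defidsmarked}. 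The invariant that every marked vertex keeps $F$-degree at most $4$ — maintained because $\brancha$ is preferred over $\branchm$ whenever the latter would create a degree-$5$ marked vertex — must be checked to hold along every branch, since Corollary~\ref{cor:csp} relies on it.

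The bulk of the work is the quantitative analysis. For each rule I would compute the measure drop $\delta_j$ in each branch as a function of the weights $w_1,w_2$ and the weight-differences $\Delta w_i$, accounting not only for vertices removed from $F$ (which lose their full weight) but crucially for the \emph{degree decreases} of surviving free vertices, each of which contributes a saving of $\Delta w_{d}$ when a neighbor leaves $F$. The monotonicity assumptions $w_1 \le w_2$ and $\Delta w_1 \ge \Delta w_2 \ge \Delta w_3$ are what let me bound these savings uniformly regardless of the precise local degree distribution, so I would use them to replace each unknown $\Delta w_{d(\cdot)}$ by the worst (smallest) admissible value and thereby obtain a \emph{valid lower bound} on every $\delta_j$. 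Each rule then yields a constraint $\sum_j \runtime^{-\delta_j} \le 1$; collecting all these inequalities over rules $(2)$--$(18)$ gives a nonlinear optimization problem in the weights $w_1,w_2$, and I would exhibit a concrete choice of weights satisfying all constraints simultaneously with $c = \runtime$.

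The main obstacle is that the branching is highly non-uniform: the rules split according to the $F$-degree of the chosen vertex $u$, the degrees of its neighbors, and local structure (whether $N_F(u)$ is a clique, whether a complete-bipartite component exists, etc.), so the number of distinct branching vectors to tabulate is large and each must be bounded by its genuine worst case over all admissible configurations. The delicate rules are the degree-$2$ and degree-$3$ cases $(9)$--$(16)$, which are the stated bottleneck of the na\"ive algorithm and the reason the weights $w_1,w_2$ are tuned rather than set to $1$; here the interplay between marking (via $\branchm$) and the induced degree reductions is what makes the recurrence close at $\runtime$ rather than at a larger base. I would therefore devote most care to checking that in rules $(9)$ and $(15)$ the marking of low-degree neighbors buys enough additional measure drop — through the $\Delta w_2, \Delta w_3$ terms — to compensate for the otherwise weak progress, and that the special-case rules $(11)$--$(14)$ for degree-$3$ vertices with particular neighbor profiles are each individually dominated by the target recurrence.
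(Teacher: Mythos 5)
Your proposal follows essentially the same route as the paper's proof: the identical Measure \& Conquer framework with the measure $k=\sum_i w_i n_i$, the same weight assumptions, a case-by-case derivation of recurrences for rules (2)--(18) bounding the number of atomic subinstances (your $L(k)$ is the paper's $P[k]$), separate verification that the CSP subroutine of Corollary~\ref{cor:csp} in Case (4) is not the bottleneck, and a final numerical optimization of $w_1,w_2$ (the paper uses $w_1=0.8482$, $w_2=0.9685$). The plan is sound and matches the paper, including the correct identification of the degree-2 and degree-3 cases as the delicate ones; what remains is the mechanical work of actually tabulating each branching vector and exhibiting the weights.
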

\begin{proof}
An instance $I$ is \emph{atomic} if Algorithm \textbf{ids} does not make a recursive call on input $I$.
Let $P[k]$ denote the maximum number of atomic subinstances recursively processed to compute a
solution for an instance of size $k$. As the time spent in each call of {\bf ids}, excluding the time
spent by the corresponding recursive calls, is polynomial, except for Case (4), it is sufficient to show
that for a valid choice of the weights, $P[k]=O(\runtimeprec^k)$, and that the time spent in Case (4) does not exceed $P[k]$. Each recursive call made
by the algorithm is on an instance with at least one edge fewer, which means that the
running time of {\bf ids} can be upper bounded by a polynomial factor of $P[k]$. Moreover, as no reduction or
branching rule increases $k$, $P[k]$ can be bounded
by analyzing recurrences
based on the measure of the created subinstances in those cases where the algorithm makes
at least $2$ recursive calls.
We will analyze these cases one by one.

{\bf Case (1)}
A marked vertex that has no free neighbor cannot be dominated. Thus, such an instance has no \ids.

{\bf Case (2)}
In this case, $G[F]$ is a disjoint union of cliques and $u$ is a vertex from a clique
of size $\ell \ge 6$ in $G[F]$. The branching $\brancha(G,u)$ creates $\ell$ subinstances
whose measure is bounded by $k-\ell w_3$. The corresponding recurrence relation is
$
P[k] \leq \ell P[k- \ell w_3].
$
For $\ell \ge 6$, the tightest of these recurrences is when $\ell = 6$:
\begin{equation}
P[k] \leq 6 P[k- 6 w_3].
\end{equation}

{\bf Case (3)}
In this case, $G[F]$ is a disjoint union of cliques and $u$ is a vertex from a clique
of size $5$ in $G[F]$. The branching $\brancho(G,u)$ creates $2$ subinstances
whose measure is bounded by $k-5 w_3$ and $k-w_3$, respectively. Note that the marked vertex which is created in the second branch has $F$-degree $4$. The corresponding recurrence is
\begin{equation}
P[k] \leq P[k- 5 w_3] + P[k- w_3].
\end{equation}

{\bf Case (4)} The graph induced by the free vertices is a disjoint union of cliques of size no more
than $4$. Corollary~\ref{cor:csp} is applied on the remaining marked graph and we note that
the number $n_i$ of vertices of $F$-degree $i$, $1\leq i \leq 3$, in this graph is no more than
$n_1\leq \mu /w_1 \leq n /w_1$,
$n_2\leq \mu /w_2 \leq n /w_2$ and
$n_3\leq \mu /w_3 \leq n /w_3$
with $n_1+n_2+n_3 \leq n$.

{\bf Case (5)}
A marked vertex $u$ with exactly one free neighbor $v$ must be dominated by $v$. Thus, $v$ is added to the \mids and all
its neighbors are deleted.

{\bf Case (6)}
If there is a subset $B$ of free vertices such that $G[B]$ induces a complete bipartite
graph and no vertex of $B$ is adjacent to a free vertex outside $B$, then the algorithm branches
into two subcases. Let $X$ and $Y$ be the two maximal independent sets of $G[B]$. Then
a \mids contains either $X$ or $Y$. In both cases we delete $B$
and the marked neighbors of either $X$ or $Y$.
The smallest possible subset $B$ satisfying the conditions of this case
is a $P_3$, that is a path on three vertices,
as $|B|>2$. Note that all smaller complete bipartite graphs are cliques and will be handled by Case (4).
Since we only count the number of free vertices, we obtain the following recurrence:
\begin{equation}
P[k] \leq 2P[k-2w_1-w_2].
\end{equation}
It is clear that any complete bipartite component with more than three
vertices would lead to a better recurrence.

{\bf Case (7)}
If there is a subset $C$ of three free vertices which form a clique and exactly one vertex $v \in C$
has free neighbors outside $C$, the algorithm either includes $v$ in the solution set or it
excludes this vertex. In the first branch, all the neighbors of $v$ are deleted
(including $C$). In the second branch, note that $v$ is not marked. Indeed, $v$'s $F$-degree
might be too high to be marked, and $v$'s neighborhood contains a clique component in $G[F]$ of which one vertex is in every
\ids of the resulting marked graph, making the marking of $v$ superfluous.
We distinguish two cases based on the number of free neighbors of some free vertex $u \in N(v) \setminus C$.
\begin{enumerate}
\item Vertex $u$ has one free neighbor. In the first branch, all of $N[v]$ are deleted, and in the second branch, $v$ is removed,
$u$'s $F$-degree decreases to $0$, and the $F$-degree of both vertices in $C \setminus \{v\}$ decreases to $1$.
This gives the recurrence:
\begin{equation}
P[k] \leq P[k-w_1-2w_2-w_3]+P[k+w_1-2w_2-w_3].
\end{equation}
\item Vertex $u$ has $F$-degree at least $2$. Then we obtain the recurrence:
\begin{equation}
P[k] \leq P[k-3w_2-w_3]+P[k+2w_1-2w_2-w_3].
\end{equation}
\end{enumerate}

{\bf Case (8)}
If there is a free vertex $u$ such that $d_F(u)=1$, a \mids either includes $u$
or its free neighbor $v_1$. Vertex $v_1$ cannot have $F$-degree one
because this would contradict the first choice criterion (a) of $u$. For the analysis, we
consider two cases:
\begin{enumerate}
\item $d_F(v_1)=2$. Let $x_1$ denote the other free neighbor of $v_1$. Note that 
$d_F(x_1) \not = 1$ as this would have been handled by Case (6).
We consider again two subcases:
\begin{enumerate}
\item $d_F(x_1)=2$. When $u$ is chosen in the \ids, $u$ and $v_1$ are deleted and the
degree of $x_1$ decreases to one. When $v_1$ is chosen in the \ids, $u,v_1$ and $x_1$
are deleted from the marked graph. So, we obtain the following recurrence for
this case:
\begin{equation}
P[k] \leq P[k-2w_2] + P[k-w_1-2w_2].
\end{equation}

\item $d_F(x_1) \geq 3$. Vertices $u$ and $v_1$ are deleted in the first branch,
and $u$, $v_1$ and $x_1$ are deleted in the
second branch. The recurrence for this subcase is:
\begin{equation}
P[k] \leq P[k-w_1-w_2] + P[k-w_1-w_2-w_3].\label{rec:tight1}
\end{equation}
\end{enumerate}

\item $d_F(v_1) \geq 3$. At least one free neighbor of $v_1$ has $F$-degree at
least 2, otherwise Case (6) would apply. Therefore the recurrence
for this subcase is:
\begin{equation}
P[k] \leq P[k-w_1-w_3] + P[k-2w_1-w_2-w_3].
\end{equation}
\end{enumerate}

{\bf Case (9)}
If there is a free vertex $u$ such that $d_F(u)=2$ and $u$ has a neighbor
of $F$-degree at most $4$ (as the neighbors $v_1,v_2$ of $u$ are ordered by increasing
$F$-degree, $v_1$ has $F$-degree at most $4$), the algorithm uses $\branchm(G,u)$ to branch into three subcases. 
Either $u$ belongs to the \mids, or $v_1$ is
taken in the \mids, or $v_1$ is marked and $v_2$ is taken in the \mids.
We distinguish three cases:
\begin{enumerate}
\item $d_F(v_1)=d_F(v_2)=2$. In this case, due to the choice of the vertex $u$
by the algorithm, all free vertices of this connected component $T$ in
$G[F]$ have $F$-degree 2. $T$ cannot be a $C_4$ (a cycle on 4 vertices)
as this is a complete bipartite graph and would have
been handled by Case (6).
In the branches where $u$ or $v_1$ belong to the \mids, the three free vertices
in $N[u]$ or $N[v_1]$ are deleted and two of their neighbors ($T$ is a cycle on
at least $5$ vertices) have their $F$-degree reduced from $2$ to $1$.
In the branch where $v_1$ is marked and $v_2$ is added to the \mids, 
$N[v_2]$ is deleted and by Case (5), the other neighbor $x_1$ of $v_1$ is
added to the \mids, resulting in the deletion of $N[x_1]$ as well. In total,
at least $5$ free vertices of $F$-degree $2$ are deleted in the third branch.
Thus, we have the recurrence
\begin{equation}
P[k] \leq 2 P[k+2w_1-5w_2] + P[k-5w_2]
\end{equation}
for this case.

\item $d_F(v_1)=2, d_F(v_2) \ge 3$. The vertices $v_1$ and $v_2$ are not adjacent,
otherwise Case (7) would apply. In the last branch, $v_1$ is marked and $v_2$ is added
to the solution. If $v_1$ and $v_2$ have a common neighbor besides $u$, then the last
branch is atomic because Case (1) applies as no vertex can dominate $v_1$. Otherwise, the
reduction rule of Case (5) applies in the last branch and the other neighbor $x_1 \not = u$
is added to the solution as well. Thus, we have the recurrence
\begin{equation}
P[k] \leq P[k-2w_2-w_3] + P[k-3w_2] + P[k-5w_2-w_3].
\end{equation}

\item $3\le d_F(v_1) \le 4$. We distinguish between two cases depending on whether there 
is an edge between $v_1$ and $v_2$.
\begin{enumerate}
\item $v_1$ and $v_2$ are not adjacent. Branching on $u$, $v_1$ and $v_2$
leads to the following recurrence:
\begin{equation}
P[k] \leq P[k-w_2-2w_3]+P[k-3w_2-w_3]+P[k-3w_2-2w_3].
\end{equation}

\item $v_1$ and $v_2$ are adjacent. We distinguish two subcases depending on whether there is a degree-$2$ vertex in $N^2(u)$.
\begin{enumerate}
\item There is a degree-$2$ vertex in $N^2(u)$. Then,
\begin{equation}
P[k] \leq P[k+w_1-2w_2-2w_3]+2P[k-2w_2-2w_3].
\end{equation}
\item No vertex in $N^2(u)$ has degree $2$. Then,
\begin{equation}
P[k] \leq P[k-w_2-2w_3]+2P[k-w_2-3w_3].\label{rec:tight2}
\end{equation}
\end{enumerate}
\end{enumerate}
\end{enumerate}

{\bf Case (10)}
If there is a free vertex $u$ such that $d_F(u)=2$ and none of the above cases
apply, then $v_1$ and $v_2$ have degree at least $5$ and the algorithm branches into the 
three subinstances of $\brancha(G,u)$: either $u, v_1,$ or $v_2$ belongs to the \mids, leading to the recurrence
\begin{equation}
P[k] \leq P[k-w_2-2w_3]+2P[k-5w_2-w_3].
\end{equation}

{\bf Case (11)}
If all neighbors of $u$ have degree $3$, then the connected component in $G[F]$ containing 
$u$ is $3$-regular due to the selection criteria of $u$. As (by criterion (a)) this 
component is not a clique, $N_F^2(u)$ is not empty. Thus, there exists some $v\in N_F[u]$ 
such that $G[N_F(v)]$ has at most one edge. This means that there are at least $4$ edges 
with one endpoint in $N_F(v)$ and the other endpoint in $N_F^2(v)$. If $|N_F^2(v)|=2$, the recurrence corresponding to the 
branching $\brancho(G,v)$ is
\begin{equation}
P[k] \leq P[k+2w_1-6w_3]+P[k+3w_2-4w_3],
\end{equation}
if $|N_F^2(v)|=4$ it is
\begin{equation}
P[k] \leq P[k+4w_2-8w_3]+P[k+3w_2-4w_3],\label{rec:tight3}
\end{equation}
and if $|N_F^2(v)|=3$ it is a mixture of the above two recurrences and is majorized by one
or the other.

{\bf Case (12)}
If $u$ has a neighbor $v$ of $F$-degree $4$, then the algorithm uses the branching procedure
$\brancho(G,v)$. If $v$ is taken in the \mids, $5$ vertices of degree at least $3$ are
removed from the instance. If $v$ is marked, the $F$-degree of $u$ decreases from $3$ to $2$. The corresponding recurrence is
\begin{equation}
P[k] \leq P[k-5w_3]+P[k+w_2-2w_3].
\end{equation}

{\bf Case (13)}
If $u$ has a neighbor $v$ of $F$-degree $5$, then the algorithm either takes $u$ in the 
\mids, or $v$, or it marks both $u$ and $v$ (note that $v$ will have $F$-degree 4). The recurrence corresponding to this case is
\begin{equation}
P[k] \leq P[k-4w_3]+P[k-6w_3]+P[k-2w_3].\label{rec:tight4}
\end{equation}

{\bf Case (14)}
In this case, $N_F[u]$ is a clique and $v_3$ is the only vertex from this clique that has
free neighbors outside $N_F[u]$. The algorithm either takes $v_3$ in the \mids or deletes 
it. Note that $N_F(v_3)$ includes a clique and that any \mids of $G[F\setminus 
\{v_3\},M]$ contains one vertex from this clique, which makes the marking of $v_3$ superfluous. 
\begin{equation}
P[k] \leq P[k-7w_3]+P[k+3w_2-4w_3].
\end{equation}

{\bf Case (15)}
We distinguish two cases based on the neighborhood of $v_3$.
\begin{enumerate}
\item $v_3$ is adjacent to $v_1$ and $v_2$. Then, $v_1$ is not adjacent to $v_2$, otherwise
Case (14) would apply. In the second branch, $v_2$'s $F$-degree drops to $1$ and in the
third branch, $v_1$'s neighbor in $N_F^2(u)$ is also selected by Case (5). This gives the
recurrence
\begin{equation}
P[k] \leq P[k-4w_3]+P[k+w_1-5w_3]+P[k-5w_3]+P[k-7w_3].
\end{equation}
\item $v_3$ is not adjacent to $v_1$ or to $v_2$. In the last branch, $7$ vertices are deleted and one vertex is marked, giving
\begin{equation}
P[k] \leq 3P[k-4w_3]+P[k-8w_3].
\end{equation}
\end{enumerate}

{\bf Case (16)}
In this case, $u$ has at least two neighbors of degree at least $6$. The recurrence corresponding to the branching $\brancha(G,u)$ is
\begin{equation}
P[k] \leq 2P[k-4w_3]+2P[k-7w_3].
\end{equation}

{\bf Case (17)}
If $u$ has degree $4$, the algorithm branches along $\brancho(G,u)$, giving the recurrence
\begin{equation}
P[k] \leq P[k-5w_3]+P[k-w_3].
\end{equation}

{\bf Case (18)}
If $u$ has degree $\ell \ge 5$, the algorithm branches along $\brancha(G,u)$. The corresponding recurrence is $P[k] \leq (\ell+1) P[k-(\ell+1)w_3]$, the tightest of which is
obtained for $\ell=5$:
\begin{equation}
P[k] \leq 6P[k-6w_3].
\end{equation}

Finally the values of weights are computed with a convex optimization program \cite{GaspersS09} (see also \cite{Gaspers08})
to minimize the bound on the running time.
Using the values $w_1=0.8482$ and $w_2=0.9685$ for the weights, one can
easily verify that $P[k]=O(\runtimeprec^k)$. In particular by this choice of the weights, the running-time required
by Corollary~\ref{cor:csp} to solve the \CSPpb instance whenever Case (2) is applied is no more than $O(1.3220^k)$
(it would be bounded by $O(1.3517^k)$ if we used the algorithm of Dantsin et al. \cite{Dantsin} for solving (2,4)-\CSPpb).
Thus, Algorithm \textbf{ids} solves \MIDSpb in time $O(\runtime^n)$.
\end{proof}

The tight recurrences of the latter proof (i.e. the worst case
recurrences) are (\ref{rec:tight1}), (\ref{rec:tight2}), (\ref{rec:tight3}), and (\ref{rec:tight4}).

\section{A Lower Bound on the Running Time of the Algorithm}
In order to analyze the progress of the algorithm during the computation of a
\mids, we used a non standard measure. In this way we have been able to
determine an upper bound on the size of the subinstances recursively processed by the
algorithm, and consequently we obtained an upper bound on the
worst case running time of Algorithm \textbf{ids}. However the use of another measure
or a different method of analysis could perhaps provide a
``better upper bound'' without changing the algorithm but only improving the
analysis.

How far is the given upper bound of Theorem~\ref{upperbound} from the best upper
bound we can hope to obtain?
In this section, we establish a lower bound on the worst case running time of
our algorithm. This lower bound gives a really good estimation on the precision of
the analysis. For example, in \cite{icalp2005} (see also \cite{AcmFGK}) Fomin et al. obtain a
$O(1.5263^n)$ time algorithm for solving the dominating set problem and
they exhibit a construction of a family of graphs giving a lower bound of $\Omega(1.2599^n)$
for its running time. They say that the upper bound of many exponential time
algorithms is likely to be overestimated only due to the choice of the measure
for the analysis of the running time, and they note the gap between their upper
and lower bound for their algorithm.
However, for our algorithm we have the following result:

\begin{theorem}
\label{lowerbound}
Algorithm {\bf ids} solves \MIDSpb in time $\Omega(1.3247^n)$.
\end{theorem}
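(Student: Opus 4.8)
The plan is to exhibit an explicit infinite family of graphs $\{G_n\}$, with $|V(G_n)| = \Theta(n)$, on which the recursion tree produced by Algorithm \textbf{ids} has $\Omega(1.3247^n)$ nodes; since the algorithm performs at least constant work per node, this bounds the running time from below. The target constant $1.3247$ is the plastic number, i.e.\ the real root of $x^3 = x+1$, which is precisely the branching factor of the recurrence $L(n) \geq L(n-2) + L(n-3)$ (dividing $x^n = x^{n-2}+x^{n-3}$ by $x^{n-3}$ gives $x^3 = x+1$). I would therefore aim for a family on which the algorithm is repeatedly forced into Case (8), sub-case 1(b), whose recurrence (\ref{rec:tight1}) removes exactly two free vertices in the first branch (taking $u$, deleting $\{u,v_1\}$) and exactly three in the second (taking $v_1$, deleting $\{u,v_1,x_1\}$).

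First I would design a gadget containing the configuration required by that case: a free vertex $u$ with $d_F(u)=1$, whose neighbor $v_1$ has $d_F(v_1)=2$, and whose remaining neighbor $x_1$ has $d_F(x_1) \geq 3$. The gadgets are chained so that $G_n$ is connected and self-similar: deleting $\{u,v_1\}$ leaves a graph isomorphic to $G_{n-2}$, while deleting $N[v_1]=\{u,v_1,x_1\}$ together with the reductions it triggers leaves a graph isomorphic to $G_{n-3}$. Writing $L(n)$ for the number of leaves of the recursion tree on $G_n$, the two recursive calls of $\brancha(G,u)$ then give $L(n) \geq L(n-2) + L(n-3)$, and a standard linear-recurrence argument yields $L(n) = \Omega(1.3247^n)$.

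The delicate part is making the self-similarity genuine, and it splits into two verifications. The first is that the algorithm really reaches Case (8) with the intended vertex $u$: I must check that on each $G_n$ none of the reduction/branching Cases (1)--(7) fires earlier (in particular $G[F]$ is never a disjoint union of cliques, no marked vertex has $F$-degree $1$, and there is no complete-bipartite component on more than two vertices nor a triangle of the Case-(7) shape), and that the selection rule (a)--(c) singles out a vertex of $F$-degree $1$ whose length-two pendant path ends at a vertex of $F$-degree at least $3$. The second, and the main obstacle, is controlling the second branch: deleting $x_1$ lowers the $F$-degrees of its other neighbors, so the gadget must be engineered---using the degree-$2$ handling and the pendant reduction of Case (5)---so that these degree drops are absorbed by reductions that make no further branch and leave precisely the next member $G_{n-3}$ of the family. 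Care is also needed so that the marking performed by the branching procedures never interferes with the structure; since Case (8) uses $\brancha$ rather than $\branchm$, no vertices are marked in the relevant branches, which keeps the two offspring instances clean copies of $G_{n-2}$ and $G_{n-3}$. Once the family is fixed and these checks pass, the lower bound follows immediately, and comparing $1.3247$ with the upper bound of Theorem~\ref{upperbound} shows that the analysis is almost tight.
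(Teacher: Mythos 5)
Your high-level strategy is the same as the paper's: exhibit an explicit family of graphs on which the recursion tree obeys a linear recurrence whose characteristic root is the plastic number, and you correctly identify that only the \emph{number} of vertices deleted in each branch matters for the lower bound (2 and 3, giving $x^3=x+1$), not the weighted measure. However, as written the proposal has a genuine gap: the family $\{G_n\}$ is never constructed. Every step that actually constitutes a proof --- defining the gadget, checking that none of Cases (1)--(7) fires first, checking that the selection criteria (a)--(c) pick the intended vertex, and checking that \emph{both} branches leave graphs isomorphic to smaller members of the family --- is deferred (``I would design a gadget\ldots'', ``the gadget must be engineered\ldots''). These verifications are precisely the content of the paper's proof, which carries them out by induction (Lemma~\ref{LBLemma1} and the lemma following it) for a concrete family $G'_l$ on which the algorithm repeatedly applies Case (9), yielding $L[k]=L[k-3]+L[k-4]+L[k-5]$; this has the same root, since $x^5-x^2-x-1=(x^3-x-1)(x^2+1)$. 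Two further points work against your specific plan: insisting on sub-case 1(b) of Case (8) (i.e.\ $d_F(x_1)\geq 3$) is the hardest variant to make self-similar, because deleting $x_1$ in the second branch perturbs degrees elsewhere in the graph; and the mechanism you invoke to absorb those perturbations, the reduction of Case (5), applies only to \emph{marked} vertices --- since $\brancha$ creates no marked vertices, your family would contain none, and Case (5) could never fire.

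It is worth noting that your Case-(8) route can be completed far more simply than you propose, by dropping the requirement $d_F(x_1)\geq 3$ and taking $G_n$ to be the path $P_n$ on $n$ vertices (so sub-case 1(a) rather than 1(b)). For $n\geq 5$ a path has no marked vertices, is not a disjoint union of cliques, is triangle-free, and is not complete bipartite, so Cases (1)--(7) never apply; criteria (a)--(c) select an endpoint $u=p_1$, and Case (8) calls $\brancha(G,u)$, whose two branches delete $\{p_1,p_2\}$ and $\{p_1,p_2,p_3\}$, leaving exactly $P_{n-2}$ and $P_{n-3}$. This gives $L(n)\geq L(n-2)+L(n-3)$ with perfect self-similarity and no gadget engineering, hence $L(n)=\Omega(1.3247^n)$. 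With such an explicit family and these checks written out (or with the paper's $G'_l$ and its Case-(9) recurrence), your argument becomes a complete proof; without them it is a proof plan whose central object is missing.
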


To prove Theorem~\ref{lowerbound} on the lower bound of the worst-case running time of
algorithm {\bf ids},
consider the graph $G_l=(V_l,E_l)$ (see Fig. \ref{lbgraph}) defined by:
\begin{itemize}
\item $V_l = \{ u_i, v_i : 1 \leq i \leq l\}$,
\item $E_l =\{u_1,v_1\} \cup \big\{ \{ u_i,v_i\},\{u_i,u_{i-1}\}, \{v_i,v_{i-1}\}, \{u_i,v_{i-1}\} : 2 \leq i \leq l \big\}$.
\end{itemize}

We denote by $G'_l=(V,\emptyset,E)$ the marked graph corresponding
to the graph $G_l=(V,E)$.

\begin{figure}
\centering
      \includegraphics[scale=0.55]{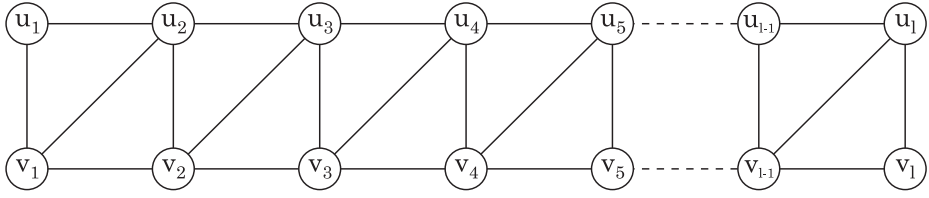}
      \caption{\label{lbgraph}graph $G_l$}
\end{figure}

For a marked graph $G=(F,M,E)$ we define
$\delta_F = \min_{u \in F} \{d_F(u)\}$
and
$MinDeg = \{u \in F \text{ s.t. } d_F(u) = \delta_F \}$ as the set of free vertices
with smallest $F$-degree.

We denote the highest $F$-degree of the free neighbors of the vertices in $MinDeg$ by
$\Delta_{\delta_F} = \max \big\{ d_F(v) : v \in N_F(MinDeg) \big\}$.

Let $CandidateCase9= \{ u \in MinDeg: \exists v \in N_F(u) \text{ s.t. } d_F(v) = \Delta_{\delta_F} \}$
be the set of candidate
vertices that {\bf ids} can choose in Case (9).
W.l.o.g. suppose that when $|CandidateCase9|\geq 2$ and {\bf ids} would
apply Case (9), it chooses the vertex
with smallest index (e.g. if $CandidateCase9 = \{u_1,v_l\}$, the algorithm would
choose $u_1$).

\begin{lemma}\label{LBLemma1}
Let $G'_l$ be the input of Algorithm {\bf ids}.
Suppose that {\bf ids} only applies Case (9) in each recursive call
(with respect to the previous rule for choosing a vertex).
Then, in each call of {\bf ids} where the remaining input graph has more
than four vertices, one of the following two properties is fulfilled:

\begin{enumerate}[(1)]
\item $CandidateCase9 = \{ u_k, v_l \}$ for a certain $k$, $1\leq k \leq l-2$, and
\begin{enumerate}[(i)]
\item the set of vertices $\bigcup_{1\leq i<k} \{u_i,v_i\}$ has been deleted from
the input graph, and
\item all vertices in $\bigcup_{k\leq i\leq l} \{u_i,v_i\}$ remain free in
the input graph.
\end{enumerate}

\item $CandidateCase9 = \{ v_k, v_l \}$ for a certain $k$, $1\leq k \leq l-2$, and
\begin{enumerate}[(i)]
\item the set of vertices $\{u_k \} \cup \bigcup_{1\leq i<k} \{u_i,v_i\}$ has
been deleted from the input graph, and
\item all vertices in $\{v_k \} \cup \bigcup_{k< i\leq l} \{u_i,v_i\}$ remain
free in the input graph.
\end{enumerate}
\end{enumerate}
\end{lemma}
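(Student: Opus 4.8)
The plan is to prove the statement by induction on the recursion tree of \textbf{ids}, showing that configurations of types (1) and (2) are the only ones that can arise (as long as more than four free vertices remain) and that each reproduces itself, with a strictly larger index, under the branching of Case (9). I will call an instance satisfying (1) [resp. (2)] a \emph{type-(1)} [resp. \emph{type-(2)}] configuration with parameter $k$. The degree pattern of $G_l$ makes the situation transparent: in a type-(1) configuration exactly the levels $1,\dots,k-1$ are gone, so $u_k$ has $F$-degree $2$ with surviving free neighbours $v_k$ (of $F$-degree $3$) and $u_{k+1}$ (of $F$-degree $4$); in a type-(2) configuration the levels $1,\dots,k-1$ together with $u_k$ are gone, so $v_k$ has $F$-degree $2$ with surviving free neighbours $u_{k+1}$ ($F$-degree $3$) and $v_{k+1}$ ($F$-degree $4$). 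In both cases every interior vertex keeps $F$-degree $4$, and the only other $F$-degree-$2$ vertex is $v_l$.

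For the base case I would verify directly that the initial call on $G'_l$ is a type-(1) configuration with $k=1$: nothing is deleted and every vertex is free, so (i) and (ii) hold vacuously, and the degree count gives $MinDeg=\{u_1,v_l\}$, $\Delta_{\delta_F}=4$, hence $CandidateCase9=\{u_1,v_l\}$. For the inductive step, I would first check that in any type-(1) or type-(2) configuration with more than four free vertices the algorithm actually reaches Case (9) and selects the left vertex $u_k$ (resp. $v_k$). Cases (1)--(5) and (8) are excluded because there are no marked vertices and no free vertex of $F$-degree below $2$, and Cases (2)--(4) because the free graph is a single connected component that is not a clique. The only delicate exclusions are Cases (6) and (7): the free graph contains the triangle $\{u_k,v_k,u_{k+1}\}$ (resp. $\{v_k,u_{k+1},v_{k+1}\}$), hence is not bipartite and a fortiori not complete bipartite, ruling out (6); and the two $F$-degree-$2$ vertices $u_k$ (resp. $v_k$) and $v_l$ are non-adjacent, so no triangle contains two vertices of $F$-degree $\le 2$, which is exactly what Case (7) would require. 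Criteria (a)--(c) together with the stated tie-break (smallest index) then force the selection of $u_k$ (resp. $v_k$), and since it has a neighbour of $F$-degree $\le 4$, Case (9) is applied through $\branchm$.

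The core of the proof is to trace the three recursive calls of $\branchm$. Ordering the two neighbours of the selected degree-$2$ vertex by increasing $F$-degree, the branches are (A) take the selected vertex, (B) take its lower-degree neighbour, and (C) mark the lower-degree neighbour and take the higher-degree one. The key observation, which I would isolate first, is that in both configuration types the two neighbours of the selected vertex are mutually adjacent (they close a triangle with it); therefore in branch (C) the marked vertex lies in the closed neighbourhood of the vertex put into the solution, so it is deleted and no marked vertex survives. Consequently every child is again a marked-vertex-free induced subgraph of $G_l$ whose deleted set is an initial segment of levels, possibly with one extra deleted $u$-vertex, which I can match against the patterns of (1) and (2). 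Reading off the surviving level indices then gives the six transitions: from type-(1) with parameter $k$, branch (A) yields type-(2) with parameter $k+1$, branch (B) yields type-(1) with $k+2$, and branch (C) yields type-(2) with $k+2$; from type-(2) with parameter $k$, branch (A) yields type-(1) with $k+2$, branch (B) yields type-(2) with $k+2$, and branch (C) yields type-(1) with $k+3$. In each case (i), (ii) and the value of $CandidateCase9$ can be verified directly from the surviving indices; the children that fall below five free vertices are precisely those whose new index exceeds $l-2$, and these are excluded by the hypothesis, so the induction closes.

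I expect the main obstacle to be the bookkeeping of branch (C) in combination with the verification that Case (9) is genuinely the triggered rule. It is essential to notice that the marking mechanism --- designed to shortcut degree-$2$ vertices --- gives no advantage on $G_l$, because the marked vertex is immediately dominated by the neighbour taken into the solution; otherwise the self-similarity would break down. The remaining work, namely the boundary degree computations confirming $MinDeg=\{u_k,v_l\}$ (resp. $\{v_k,v_l\}$) and that the tie-break picks the left vertex, is routine once the triangle observation is in place.
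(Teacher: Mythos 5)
Your proposal is correct and takes essentially the same approach as the paper: induction over the recursive calls, with base case $CandidateCase9=\{u_1,v_l\}$ and a trace of the three branches of $\branchm$ in Case (9) showing that type-(1) and type-(2) configurations reproduce one another with larger parameters (your six transitions, including the key observation that the marked neighbor is immediately deleted because the two neighbors of the selected vertex are adjacent, agree with the paper's computations). The only differences are cosmetic: you additionally rule out Cases (1)--(8), which this lemma does not require since it assumes Case (9) is applied (the paper proves that separately in the following lemma), and you write out the type-(2) transitions explicitly where the paper appeals to symmetry.
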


\begin{proof}
We prove this result by induction. It is not hard to see that $Candidate\-Case9=\{u_1,v_l\}$
for $G'_l$ and that Property (1) is verified.

Suppose now that Property (1) is fulfilled. Then there exists an integer $k$, $1\leq k \leq l-1$,
such that $CandidateCase9 = \{ u_k, v_l \}$.
Since {\bf ids} applies Case (9) respecting the rule for choosing the vertex in $CandidateCase9$,
the algorithm chooses vertex $u_k$.
Then we branch on three subinstances:
\begin{enumerate}[(b1)]
\item Take $u_k$ in the \mids and remove $N[u_k]$.
Thus, the remaining free vertices are $\{v_{k+1} \} \cup \bigcup_{k+1< i\leq l} \{u_i,v_i\}$
whereas all other vertices are removed. Moreover for this remaining
subinstance, we obtain $Candidate\-Case9=\{ v_{k+1}, v_l \}$. So, Property
(2) is verified.
(Note also that $|N[u_k] \cap \bigcup_{k\leq i\leq l} \{u_i,v_i\}| = 3$.)

\item Take $v_k$ in the \mids and remove $N[v_k]$:
$\bigcup_{k+2\leq i\leq l} \{u_i,v_i\}$ is the set of the remaining free vertices
and all other vertices are removed. For the remaining
subinstance we obtain $Candidate\-Case9=\{ u_{k+2}, v_l \}$
and Property (1) is verified.
(Note also that $|N[v_k] \cap \bigcup_{k\leq i\leq l} \{u_i,v_i\}| = 4$.)

\item Take $u_{k+1}$ in the \mids and remove $N[u_{k+1}]$. Thus,
the remaining free vertices are $\{v_{k+2} \} \cup \bigcup_{k+2< i\leq l} \{u_i,v_i\}$
and all other vertices are removed. For this remaining
subinstance we obtain $CandidateCase9=\{ v_{k+2}, v_l \}$
and Property (2) is verified.
(Note also that $|N[u_{k+1}] \cap \bigcup_{k\leq i\leq l} \{u_i,v_i\}| = 5$.)
\end{enumerate}

If we suppose now that Property (2) is fulfilled, branching on
a vertex $v_k$ gives us the same kind of subproblems.
\end{proof}

Now, we prove that, on input $G_l$,
Algorithm {\bf ids} applies Case (9) as long as
the remaining graph has ``enough'' vertices.

\begin{lemma}
Given the graph $G'_l$ as input,
as long as the remaining graph has more than four vertices,
Algorithm {\bf ids} applies Case (9) in each recursive call.
\end{lemma}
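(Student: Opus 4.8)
The plan is to run, in lock-step with Lemma~\ref{LBLemma1}, an induction on the depth of the recursion: Lemma~\ref{LBLemma1} supplies the shape of every subinstance under the hypothesis that Case~(9) has been applied so far, and I would add to each inductive step the verification that Algorithm~\textbf{ids} actually falls through to Case~(9) on such a subinstance. Concretely, as long as all previous calls applied Case~(9) and the current graph has more than four free vertices, Lemma~\ref{LBLemma1} guarantees that the free part of $G=(F,M,E)$ is a ``suffix ladder'' -- either $\bigcup_{k\le i\le l}\{u_i,v_i\}$ (Property~(1)) or $\{v_k\}\cup\bigcup_{k<i\le l}\{u_i,v_i\}$ (Property~(2)) with $1\le k\le l-2$ -- and, importantly, that $M=\emptyset$: in the third branch of $\branchm$ the neighbor that would be marked is adjacent to the vertex taken into the solution, so it is deleted instead of marked. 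Showing that \textbf{ids} reaches Case~(9) on every such instance closes the induction and proves the lemma.

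First I would pin down the degree sequence of a suffix ladder. The only free vertices of $F$-degree~$2$ are the two ends ($u_k$ or $v_k$, and $v_l$); their inward neighbors $v_k$ and $u_l$ have $F$-degree~$3$; all the remaining vertices have $F$-degree~$4$. Since the hypothesis $|F|>4$ is equivalent to $l-k\ge 2$, the interior vertices $u_{k+1}$ and $v_{l-1}$ genuinely attain $F$-degree~$4$, so $\delta_F=2$, $\Delta_{\delta_F}=4$, and $CandidateCase9$ equals the set of the two endpoints, exactly as in Lemma~\ref{LBLemma1}.

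Next I would go down the case list of \textbf{ids} and discard every earlier case. Because $M=\emptyset$, neither Case~(1) nor Case~(5) can fire. The free graph is connected and not regular (it contains both degree-$2$ and degree-$4$ vertices), hence it is not a single clique and so not a disjoint union of cliques, which rules out Cases~(2)--(4). It contains triangles, e.g.\ $\{u_k,v_k,u_{k+1}\}$, so its unique component is not bipartite and Case~(6) does not apply. For Case~(7) I would enumerate the triangles: the triangle incident to each degree-$2$ endpoint, namely $\{u_k,v_k,u_{k+1}\}$ (or $\{v_k,v_{k+1},u_{k+1}\}$ under Property~(2)) at the left and $\{v_l,u_l,v_{l-1}\}$ at the right, contains two vertices of $F$-degree~$\ge 3$, while every other triangle lies entirely in the degree-$\ge 3$ interior; hence no triangle has \emph{exactly} one vertex of $F$-degree~$\ge 3$, and Case~(7) is skipped. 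Finally, in the else-branch the selection criteria~(a)--(c) choose a vertex of minimum $F$-degree possessing a maximum-$F$-degree neighbor, i.e.\ the smallest-index vertex of $CandidateCase9$; it has $F$-degree~$2$ and a neighbor of $F$-degree~$3$ (thus at most~$4$), so \textbf{ids} branches according to $\branchm$, which is precisely Case~(9).

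I expect Case~(7) to be the only delicate point, as it is the one case whose exclusion is not immediate from $M=\emptyset$ or from connectivity: one must check that both degree-$2$ endpoints sit inside triangles whose two other vertices have $F$-degree~$\ge 3$, and it is exactly the hypothesis $|F|>4$ (i.e.\ $l-k\ge 2$) that prevents the relevant interior vertices $u_{k+1}$ and $v_{l-1}$ from degenerating to $F$-degree~$3$, which would spoil both this check and the identification of $CandidateCase9$. Everything else reduces to routine degree counts on the suffix ladder.
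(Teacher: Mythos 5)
Your proposal is correct and takes essentially the same route as the paper's proof: an induction run in lock-step with Lemma~\ref{LBLemma1}, using the structure it guarantees (no marked vertices, a connected, non-clique, non-bipartite free graph with minimum $F$-degree~$2$ whose minimum-degree vertices have a low-degree neighbor) to rule out Cases (1)--(8) one by one so that the algorithm falls through to Case~(9). Your additional details (the explicit degree counts on the suffix ladder, the triangle enumeration excluding Case~(7), and the observation that the would-be-marked vertex is always adjacent to the chosen solution vertex and hence deleted, keeping $M=\emptyset$) merely make explicit what the paper handles by citing Lemma~\ref{LBLemma1}.
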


\begin{proof}
We prove this result also by induction.
First, when the input of the algorithm is the graph $G'_l$,
it is clear that none of Cases (1) to (8) can be applied.
So, Case (9) is applied since $CandidateCase9\not = \emptyset$ according
to Lemma~\ref{LBLemma1}.

Consider now a graph obtained from $G'_l$ by repeatedly
branching using Case (9).
By Lemma~\ref{LBLemma1}, the remaining graph
has no marked vertices (this excludes that
Cases (1) and (5) are applied). It has no clique component induced by the set of free vertices
since the graph is connected and there is no edge between
$u_{l-1}$ and $v_l$ (this excludes Cases (2)--(4)). The free vertices do not induce a bipartite graph
since $\{ v_{l-1}, u_l, v_l\}$ induces a $C_3$ (this excludes Case (6)). There is no clique $C$
such that only one vertex of $C$ has neighbors outside $C$: the largest induced clique in the remaining
graph has size 3 and each of these cliques has at least two vertices having some neighbors outside
the clique (this excludes Case (7)). Also, according to Lemma~\ref{LBLemma1}, the remaining graph
has no vertex of degree 1 (this excludes Case (8)) and $CandidateCase9\not = \emptyset$.
Consequently, the algorithm applies Case (9).
\end{proof}

\begin{figure}[htb]
\centering
      \includegraphics[scale=0.6]{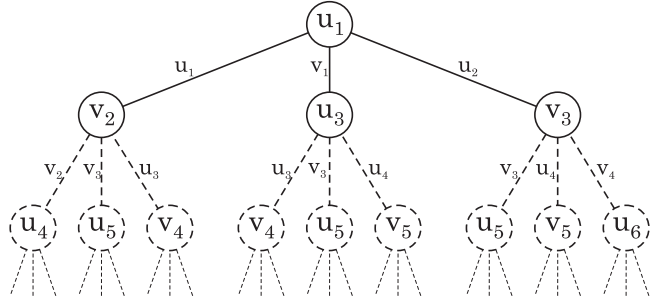}
      \caption{\label{lbsearchtree}a part of the search tree}
\end{figure}

Figure \ref{lbsearchtree} gives a part of the search tree illustrating the fact that
our algorithm recursively branches on three subinstances with respect to Case (9).

\begin{proof}[Proof of Theorem~\ref{lowerbound}]
Consider the graph $G'_l$ and the search tree which results from branching using
Case (9) until $k$ vertices, $1\leq k \leq 2l$, have been removed from the
given input graph $G'_l$ ($G'_l$ has $2l$ vertices).
Denote by $L[k]$ the number of leaves in this search tree.
It is not hard to see that this leads to the following recurrence
(see the notes in the proof of Lemma \ref{LBLemma1}):
$$L[k] = L[k-3] + L[k-4] + L[k-5]$$ and therefore $L[k] \geq 1.3247^{k}$.
Consequently, the maximum number of leaves that
a search tree for {\bf ids} can contain, given an input graph on $n$
vertices, is $\Omega(1.3247^n)$.
\end{proof}

\section{Conclusions and Open Questions}

In this paper we presented a non trivial algorithm solving
the \textsc{Minimum Independent Dominating Set} problem. Using a non standard measure on
the size of the considered graph, we proved that our algorithm achieves a running
time of $O(\runtime^n)$. Moreover we showed that $\Omega(1.3247^n)$ is a lower bound on
the running time of this algorithm by exhibiting a family of graphs for which
our algorithm has a high running time.

A natural question here is: is it is possible to obtain a better upper bound on
the running time of the presented algorithm by considering another measure or
using other techniques. Or is it possible that this upper bound is tight?



\end{document}